\documentclass[12pt,a4paper]{article}

\usepackage[margin=1in]{geometry}

\usepackage{amsmath,amssymb,amsthm}
\usepackage{mathtools}
\usepackage{mathrsfs}
\usepackage{bm}

\mathtoolsset{showonlyrefs}

\usepackage{graphicx}
\usepackage{dcolumn}

\usepackage[dvipsnames]{xcolor}
\usepackage{hyperref}

\hypersetup{
    colorlinks=true,
    linkcolor=Blue,
    citecolor=Blue,
    urlcolor=Blue
}
\usepackage{authblk}
\usepackage{verbatim}
\usepackage{comment}
\usepackage{siunitx}
\usepackage{enumerate}

\newtheorem{theorem}{Theorem}

\theoremstyle{definition}

\renewcommand{\d}{\mathrm{d}}
\newcommand{\I}{\mathrm{i}}

\newcommand{\ud}{\mathrm{d}}

\newcommand{\ds}{\,\ud s}

\newcommand{\dz}{\,\ud z}
\DeclareMathOperator{\e}{e}

\DeclareMathOperator{\sign}{sign}

\begin{document}

\title{A note on the lower boundary condition for Ekman flows}

\author[1]{\sc Christian Puntini \footnote{Corresponding author: \href{mailto:christian.puntini@univie.ac.at}{christian.puntini@univie.ac.at}}}
\author[2]{\sc Luigi Roberti}

\affil[1]{Faculty of Mathematics, University of Vienna, Oskar--Morgenstern--Platz 1, 1090 Vienna, Austria}
\affil[2]{Institut f\"ur Angewandte Mathematik, Leibniz Universit\"at Hannover, Welfengarten 1, 30167 Hannover, Germany}

\date{}
\maketitle

\noindent\textbf{MSC 2020:} 	34A05, 	34B60, 76U05, 76D05

\noindent\textbf{Keywords:} Ekman flow, stress-free boundary condition, 
finite-depth ocean, WKB approximation\\

\noindent\rule{\textwidth}{0.5pt}
\begin{abstract}
We show that the assumption of a stress-free boundary condition at a finite intermediate depth, namely, at the bottom of the Ekman layer, in the analysis of wind-driven ocean flows necessarily leads to an unphysical current profile. Indeed, if the $z$-derivative of the fluid velocity vanishes at a given depth, then this depth necessarily corresponds to a minimum of the velocity profile, with the velocity increasing beneath it. Using a WKB ansatz based on the small variations of the ocean's water density at great depths, we also argue that a no-slip condition at the bottom of the ocean, if sufficiently deep, still effectively implies (up to a very small error) the orthogonality of the Ekman transport and the wind-stress.
\end{abstract}
\noindent\rule{\textwidth}{0.5pt}
\section{Introduction and governing equations}
The study of wind-driven processes at the ocean surface is a classic topic in physical oceanography, with important implications for the ocean circulation and, consequently, Earth’s climate \cite{pedlosky,Siedler2013book,talley}. Its origins can be traced back to the pioneering work of Ekman \cite{Ekman}, who, in 1905, was the first to provide a theoretical description of wind-driven surface currents. His analysis was motivated by the observations made by F. Nansen during the 1893--1896 Arctic expedition aboard the \emph{Fram} vessel, which revealed that sea ice drifts at an angle to the right of the prevailing wind direction. Ekman's explicit solution (see \cite{11} for a concise overview of Ekman's work) applies to the idealised setting of a homogeneous ocean forced by a steady, spatially uniform wind and characterised by a constant vertical eddy viscosity. The interplay between the Coriolis force and the frictional force induced by the wind stress gives rise to the characteristic dynamics of the Ekman currents. The resulting solution exhibits the following key features (see Fig. \ref{ekman spiral}):
\begin{itemize}
\item the surface current is directed at an angle of \(45^\circ\) to the wind, to the right in the Northern Hemisphere and to the left in the Southern Hemisphere;
\item as the depth increases, the current velocity gradually decreases while its direction rotates progressively away from the wind, giving rise to the characteristic \emph{Ekman spiral};
\item the depth-integrated wind-driven transport, known as the \emph{Ekman transport}, is directed at a right angle to the wind, to the right in the Northern Hemisphere and to the left in the Southern Hemisphere.
\end{itemize}
\begin{figure}    \centering    
\includegraphics[width=0.5\linewidth]{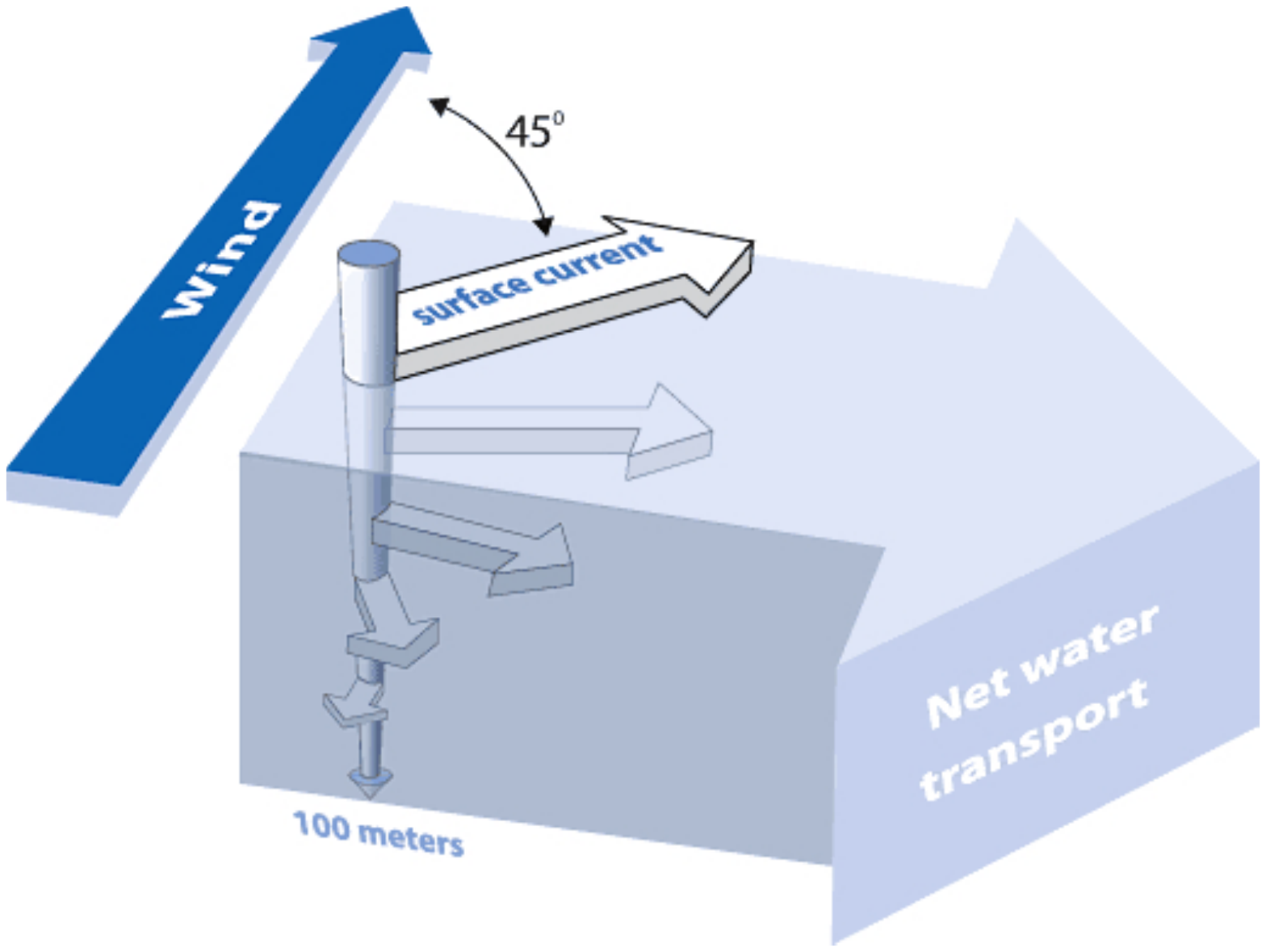}
    \caption{\label{ekman spiral} Schematic representation of the classical Ekman flow: the surface current is deflected by $45^\circ$ relative to the wind direction, and progressively deeper layers exhibit decreasing deflection and velocity. The vertically integrated, wind-driven water transport is directed at right angles to the wind. Credit: NOAA.}
\end{figure}
However, field observations show that the surface deflection angle is far from constant, typically ranging from \(10^\circ\) to \(80^\circ\) (see \cite{NostroJFM,13} and the references therein), in contrast to the fixed \(45^\circ\) predicted by the classical Ekman model. Although it can be proven that the qualitative features of Ekman dynamics persist even when both the water density and the eddy viscosity vary with depth (see \cite{12, NostroJFM}), the observed variability of the deflection angle is widely attributed to the vertical structure of the eddy viscosity. In fact, \cite{NostroJFM} shows that the surface deflection angle is highly sensitive to this structure: different viscosity profiles can produce markedly different departures of the surface current from the wind direction.
Nevertheless, experimental results provide evidence that the Ekman transport is effectively (up to some errors) at a right angle to the blowing wind (see, e.g., \cite{Chereskin, Price1987WindDrivenOC}).

From a theoretical point of view, wind-driven flows are governed by a set of differential equations derived asymptotically from the Navier--Stokes equations, together with a surface boundary condition (coupling the wind and the ocean stresses) and a bottom boundary condition. The Ekman layer is effectively the oceanic boundary layer generated by the transfer of momentum from the wind to the ocean; therefore, it is reasonable to impose the continuity of the stress at the ocean surface. The present letter instead aims to analyse the bottom boundary condition for Ekman flows. Before proceeding further, let us briefly recall the governing equations of Ekman dynamics.

\subsection{The governing equations of Ekman-type flows}
We define a local Cartesian coordinate system $(x, y, z)$ with orthonormal basis vectors $(\mathbf{e}_x, \mathbf{e}_y, \mathbf{e}_z)$. The frame is centred at a point on the spherical surface (excluding the poles, where the horizontal basis vectors are ill-defined) and rotates rigidly with the sphere at the angular velocity $\Omega\approx 7.29\,\cdot\, 10^{-5}\,\rm{s^{-1}}$ about the polar axis (see Fig.~\ref{coordinates}). The velocity components in this frame are denoted by $(u, v, w)$.
\begin{figure}    \centering
    \includegraphics[width=.5\linewidth]{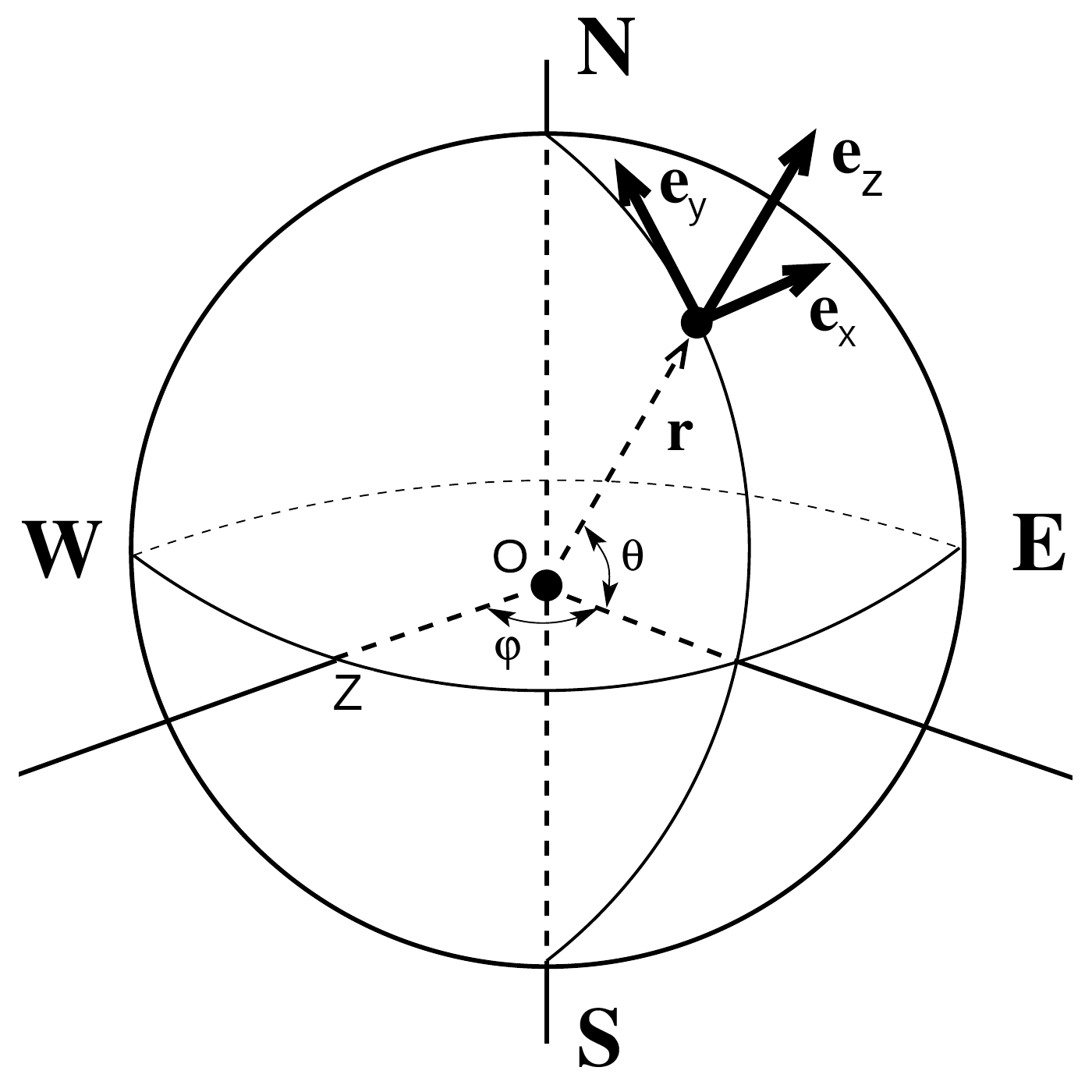}
    \caption{\label{coordinates} The Cartesian coordinate system associated with the rotating Earth. The position vector $r$ specifies the location of a point $P$ with latitude $\theta$ and longitude $\varphi$.  while $Z$ denotes Null Island (coordinates $0^\circ\, \rm{N},0^\circ\, \rm{E} $).}
\end{figure}
The standard approach in the study of Ekman flows is to consider the equations of motion within the so-called $f$-plane approximation. This amounts to approximating a relatively small region of the ocean by a tangent plane and treating the Coriolis parameter $f=2\Omega\sin\theta$ as constant. Furthermore, when analysing wind-driven currents, it is customary to assume that the ocean surface is flat. This assumption is justified by the fact that vertical motions are weaker than horizontal ones and can be rigorously justified through asymptotic arguments; see, e.g., \cite{NARWA}. Under these assumptions, the equations governing the dynamics of Ekman flows reduce to (see \cite{LoperBook})
\begin{equation} \label{Ekman 1}
\left.\begin{aligned}
(m(z)u'(z))' &= -f\rho(z)v(z) \\
(m(z)v'(z))' &= f\rho(z)u(z)
\end{aligned}\right\} \qquad \text{for $-H < z < 0$},
\end{equation}
where a prime denotes differentiation with respect to $z$ (see also \cite{12,13,wang}). In the system \eqref{Ekman 1}, $m$ is the dynamic (vertical) eddy viscosity and $\rho$ is the density of the fluid, both assumed to be depth-dependent. Given $u$ and $v$, and knowing the viscosity profile, the vertical velocity $w$ is obtained via the equation of mass conservation 
\begin{equation}\label{incompressibility}
\frac{\partial u}{\partial x}+\frac{\partial v}{\partial y}+\frac{\partial w}{\partial z}+\frac{\rho'(z)}{\rho(z)}w=0.
\end{equation}
Finally, the pressure is given by
\begin{equation}
 p(z)=p_{\rm atm} + g\int^0_z  \rho(s)\ds.
\end{equation}
where $p_{\rm atm}$ is the atmospheric pressure, assumed to be constant.\\
\noindent
On the flat surface $z=0$, the water's shear stress matches the wind stress $ \bigl(\tau_x,\tau_y\bigr)$:
\begin{equation} \label{surface BC 1}
    m(0)\bigl(u'(0),\,v'(0)\bigr) = \bigl(\tau_x,\tau_y\bigr) .
\end{equation}
Note that, in general, $\bigl(\tau_x,\tau_y\bigr)$ is a function of the wind speed and the surface current $(u(0),v(0))$ and may be given by a (possibly nonlinear) bulk formula, such as in \cite{SL}, but its precise form will not be important for us in this work.

An important \emph{caveat}, which we note here only as a side remark, is that in fact the formulation of the stress in the water, namely, the left-hand side of \eqref{surface BC 1}, is debatable, in the sense that the definition of an eddy viscosity at the surface is not completely formally correct; for a detailed discussion of this fact, we refer the interested reader to \cite{NostroJFM}. Here, we simply fix the boundary condition on the surface in the form \eqref{surface BC 1}, which is the most common one in physical oceanography \cite{12,NARWA,13,talley, vallis}; the precise choice of this condition is, in fact, irrelevant to our subsequent discussion. Indeed, the subject of this paper is not the boundary condition on the surface but rather the one to impose at a lower depth, the choice of which is often motivated by the notion of \emph{Ekman transport}.\\
\noindent
The Ekman transport, in the case of variable density, is defined as
\begin{equation}\label{transport}
\begin{aligned}    
    \mathrm{M_E}&=\int^0_{-H}\rho(z)\bigl(u(z),\,v(z)\bigr)\dz \\
    &=\dfrac{1}{f}\int^0_{-H}\bigl((m(z)v'(z))',\,-(m(z)u'(z))'\bigr)\dz\\
    &=\dfrac{1}{f}\left\{\bigl(\tau_y,\, -\tau_x\bigr)- m(-H)\bigl(v'(-H),\,-u'(-H)\bigr)\right\}.
    \end{aligned}
\end{equation}
As we mentioned earlier, measurements show that the Ekman transport and the wind stress form a $90^\circ$-angle \cite{Chereskin, Price1987WindDrivenOC}. Looking at the formula in \eqref{transport}, it is immediate to observe that the first term $\bigl(\tau_y,\, -\tau_x\bigr)$ is perpendicular to the wind stress vector $\boldsymbol{\tau}=\bigl(\tau_x,\, \tau_y\bigr)$,
and therefore the term $m(-H)\bigl(v'(-H),\,-u'(-H)\bigr)$ should be zero to satisfy the orthogonality condition $\mathrm{M_E}\perp \boldsymbol{\tau}$. This is where the bottom boundary condition enters.

Often, the simplification of infinite depth ($H = \infty$) is made \cite{12,11,NARWA,13,talley}. Since the Ekman flow should vanish at large depths, this translates to the requirement
\begin{equation}
    \lim_{z\to-\infty}(u(z),v(z)) = (0,0);
\end{equation}
then, $\displaystyle \lim_{z\to-\infty}(u'(z),v'(z)) = (0,0)$, as long as the eddy viscosity is ``well behaved" at large depths (which is the case, since the eddy viscosity may be assumed to tend to the constant molecular viscosity at large depths). However, it is of course more physically realistic to consider a finite depth. Then, $z=-H$ can be viewed as the bottom of the ocean, in which case either the no-slip condition
\begin{equation} \label{free slip}
    (u(-H),\,v(-H)) = (0,0)
\end{equation}
or the stress-free condition
\begin{equation}\label{stress free}
    (u'(-H),\,v'(-H)) = (0,0),
\end{equation}
may be imposed; see, e.g., \cite{Deusebio2014Ekman,GerardVaret2006Ekman,Shrira}. The condition \eqref{stress free} ensures the orthogonality condition $\mathrm{M_E}\perp \boldsymbol{\tau}$, but, except in the trivial case of zero wind, implies that the current is not zero at the bottom of the ocean; on the other hand, \eqref{free slip} imposes no motion on the ocean's bed, which is more physically realistic, but has the drawback that, in \eqref{transport}, the Ekman transport is only approximately---though, usually, very nearly---perpendicular to the wind stress (cf. the discussion at the end of the paper). Finally, in several instances where the authors focus on the mixed layer (such as \cite{LewisBelcher,McWilliamsETAL,PriceSundermeyer}), the depth of the Ekman layer is taken \emph{a priori} to be a value $-D\in(-H,0)$, thus an intermediate value between the ocean's bed and the free surface. In this case, a stress-free condition at $-D$ analogous to \eqref{free slip}, namely
\begin{equation}\label{stress free 2}
    (u'(-D),\,v'(-D)) = (0,0),
\end{equation}
is assumed, and in order to match the orthogonality condition $\mathrm{M_E}\perp \boldsymbol{\tau}$, the flow is taken to be negligible for $z<-D$.

The aim of this paper is to show that the boundary condition \eqref{stress free 2}, when naively applied to the theoretical study of Ekman flows, inevitably leads to an inconsistency. In fact, under this assumption, the Ekman flow, which is implicitly assumed to become negligible below the depth $z=-D$, in fact increases in magnitude beneath this depth, effectively giving rise to a “reverse Ekman spiral.”

\section{Main result}

Let $-H < -D < 0$, where $-H$ is the ocean's depth and $-D$ an intermediate depth (usually the bottom of the Ekman layer). It is convenient to reformulate the problem in terms of the complex notation
\begin{equation}
    U = u + \I v, \qquad \bm{\tau} = \tau_x + \I\tau_y;
\end{equation}
with this notation, we can write \eqref{Ekman 1} , \eqref{surface BC 1} and \eqref{stress free 2} as:
\begin{equation} \label{Ekman complex}
\left\{\begin{aligned}
    & (m(z)U'(z))' =  \I f\rho(z)U(z), \quad z\in (-H,0), \\
    & m(0)U'(0) = \bm{\tau}, \\
    & U'(-D) = 0.
\end{aligned}\right.
\end{equation}

\begin{theorem} \label{theorem 1}
Let $-H < -D < 0$ and suppose that $U$ satisfies \eqref{Ekman complex}. Then $|U|$ decreases with depth until $z=-D$, where it reaches its minimum, and increases again with depth below $z=-D$. Similarly, in the northern hemisphere ($f>0$), the angle between $\bm{\tau}$ and $U(z)$ increases with depth until $z=-D$, reaches a maximum there, and subsequently decreases for greater depths.
\end{theorem}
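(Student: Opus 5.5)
The approach is an energy-type identity for the complex ODE in \eqref{Ekman complex}. Throughout I assume the standing hypotheses $m>0$ and $\rho>0$ on $[-H,0]$, enough regularity for the equation to make sense classically (say $m\in C^1$ and $\rho$ continuous), and $\bm{\tau}\neq 0$.

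The central object is the auxiliary quantity
\begin{equation}
Q(z) := m(z)\,U'(z)\,\overline{U(z)} .
\end{equation}
Differentiating and using the equation $(mU')'=\I f\rho U$ gives
\begin{equation}
Q'(z) = \I f\rho(z)|U(z)|^2 + m(z)|U'(z)|^2 .
\end{equation}
Hence $(\mathrm{Re}\,Q)' = m|U'|^2\ge 0$ and $(\mathrm{Im}\,Q)' = f\rho|U|^2$. The stress-free condition $U'(-D)=0$ gives $Q(-D)=0$, so
\begin{equation}
\mathrm{Re}\,Q(z)=\int_{-D}^z m|U'|^2\ds, \qquad \mathrm{Im}\,Q(z)= f\int_{-D}^z \rho|U|^2\ds .
\end{equation}

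The second step rules out degenerate cases, using uniqueness for the linear second-order system written in the variables $(U, mU')$.
\begin{itemize}
\item If $U(-D)=0$, then together with $U'(-D)=0$ uniqueness forces $U\equiv 0$. This contradicts $m(0)U'(0)=\bm{\tau}\neq0$.
\item If $U'\equiv 0$ on some subinterval, then $U$ is constant there, and $(mU')'=0=\I f\rho U$ forces $U=0$ there. Uniqueness again gives $U\equiv0$.
\end{itemize}
Consequently $|U'|^2$ is not identically zero on any interval, so $\mathrm{Re}\,Q>0$ on $(-D,0]$ and $\mathrm{Re}\,Q<0$ on $[-H,-D)$. Since $(|U|^2)' = 2\,\mathrm{Re}(U'\overline{U}) = 2\,\mathrm{Re}\,Q/m$, the function $|U|^2$ is increasing in $z$ above $-D$ and decreasing in $z$ below $-D$. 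In terms of depth, $|U|$ decreases going down to $z=-D$, attains its minimum there, and increases beneath.

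This monotonicity, together with $U(-D)\neq 0$, shows that $U$ never vanishes. We may therefore choose a continuous argument $\theta(z)$ with $U=|U|e^{\I\theta}$. Its derivative is
\begin{equation}
\theta' = \mathrm{Im}(U'/U) = \frac{\mathrm{Im}\,Q}{m|U|^2}.
\end{equation}
For $f>0$, $\theta'$ is positive on $(-D,0]$ and negative below $-D$. Going downward from the surface, $\theta$ therefore decreases, meaning $U$ rotates clockwise, i.e.\ to the right, away from $\bm{\tau}$. Thus the angle $\arg\bm{\tau}-\theta(z)$, measured clockwise from the wind stress, increases with depth down to $-D$, is maximal there, and decreases below.

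The step I expect to need most care is this angle statement. The angle must be understood as the continuously lifted clockwise angle from $\bm{\tau}$ rather than a value reduced to $[0,\pi]$, so that it is well defined even if the spiral turns by more than $\pi$. The non-vanishing of $U$, which makes the lift possible, comes from the $|U|$ monotonicity established first.
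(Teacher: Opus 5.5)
Your proof is correct and follows the paper's own argument: you derive the identity for $m U'\overline{U}$ by integrating from $-D$ (using $U'(-D)=0$), split it into real and imaginary parts, and read off the monotonicity of $|U|$ and of the argument of $U$. Your extra steps supply justifications the paper leaves implicit. Ruling out $U(-D)=0$ and $U'\equiv 0$ on a subinterval via uniqueness gives the strict signs and shows $U$ never vanishes, so the polar form and the lifted angle are legitimate. The second of these steps uses $f\neq 0$, which is harmless: for $f=0$, $mU'$ is constant, so $\bm{\tau}=m(-D)U'(-D)=0$ and no solution with $\bm{\tau}\neq 0$ exists.
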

\begin{proof}
We argue along the lines of the proof of Theorem 1 in \cite{NostroJFM}. We multiply the first equation in \eqref{Ekman complex} by $\overline{U(z)}$, integrate from $-D$ to $z\in [-H,0]$, and perform an integration by parts; this yields
\begin{equation} \label{energy}
\begin{aligned}
    m(z)U'(z)\overline{U(z)} &= \int_{-D}^z m(s)|U'(s)|^2\ds + \I f\int_{-D}^z\rho(s)|U(s)|^2\ds.
\end{aligned}
\end{equation}
Writing $U(z)$ in the exponential form
\begin{equation}
    U(z) = r(z)\e^{\I\phi(z)},
\end{equation}
where $r(z) = |U(z)|$ and $\phi(z) = {\rm arg}(U(z))$, plugging into \eqref{energy}, and comparing the real and imaginary parts of the two sides, we obtain
\begin{equation} \label{real part}
    m(z)r'(z)r(z) = \int_{-D}^z m(s)|U'(s)|^2\ds
\end{equation}
and
\begin{equation} \label{imag part}
    m(z)\phi'(z)r(z)^2 = f\int_{-D}^z\rho(s)|U(s)|^2\ds.
\end{equation}
From \eqref{real part}, it follows 
\begin{equation}
    r'(z)\,
    \begin{cases}
        > 0 & \text{if $z\in (-D,0]$}, \\[0.2em]
        = 0 & \text{if $z=-D$}, \\[0.2em]
        < 0 & \text{if $z\in[-H,-D)$},
    \end{cases}
\end{equation}
whereas \eqref{imag part} implies
\begin{equation} \label{rotation}
    \sign(f)\phi'(z)\,
    \begin{cases}
        > 0 & \text{if $z\in (-D,0]$}, \\[0.2em]
        = 0 & \text{if $z=-D$}, \\[0.2em]
        < 0 & \text{if $z\in[-H,-D)$}.
    \end{cases}
\end{equation}
This gives the claim.
\end{proof}
\noindent
See Fig. \ref{spiral_reverse} for a sketch of the behaviour of the solution to \eqref{Ekman complex}.

\begin{figure}    \centering
    \centering
    \includegraphics[width=0.5\linewidth]{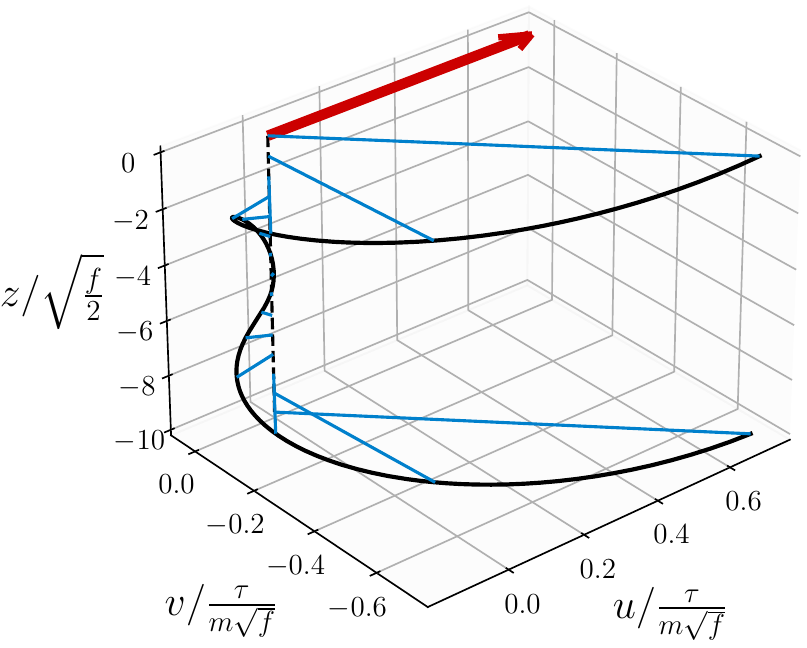}
    \caption{An illustration of the behaviour of the ``reverse Ekman spiral" from Theorem \ref{theorem 1}. Here, for simplicity, $m(z) \equiv m = {\rm const.}$, $f>0$, $\bm{\tau} = (\tau,0)$ for $\tau>0$, and the vertical variable $z$ and the horizontal velocity components are scaled with respect to $\sqrt{f/2}$ and $\tau/(m\sqrt{f})$, respectively. The red arrow denotes the direction of the wind. The derivative of $U = u + \I v$ vanishes for $z = -5\sqrt{f/2}$.}
    \label{spiral_reverse}
\end{figure}

\section{Discussion}\label{conclusion}

In this work, our aim was to highlight the importance of imposing the appropriate boundary condition in theoretical studies, which represent highly idealised real-world scenarios. In particular, we analysed the bottom boundary condition for the Ekman dynamics, and we have proved that assuming \emph{a priori} a certain depth for the Ekman layer, with a stress-free boundary condition, leads to an unrealistic current profile. On the other side, also assuming the depth to be infinite is not physical, and on the mathematical side it implies the cancellation of one of the two linearly independent solutions of the second-order ODE governing the Ekman flows. Although we do not question the use of a stress-free boundary condition in the study or the the parameterisation of the mixed upper surface layer (as, for example, in  \cite{LewisBelcher,McWilliamsETAL,PriceSundermeyer}), we argue that the more formally correct choice is to take a finite-depth ocean with the no-slip boundary condition \eqref{free slip}. In this context, the Ekman depth emerges in the parametrisation of the vertical eddy viscosity $m(z)$. Since the eddy viscosity in this framework could be thought of as the turbulent viscosity generated by the wind stress, the Ekman depth, or the bottom of the Ekman layer, can be identified as the depth below which the wind effects are negligible, and hence such eddy viscosity becomes the very small molecular viscosity. Field measurements confirm this qualitative picture: the eddy viscosity typically increases with depth up to a certain point, beyond which it decreases towards the molecular value \cite{sentchev}. From a theoretical standpoint, this profile is commonly reproduced through the so-called KPP (K-Profile Parametrisation) scheme \cite{Large94}, which prescribes $m(z)$ as a smooth function matching the surface boundary layer physics to the interior, and is widely adopted in ocean models for its ability to capture this non-monotonic behaviour. The resulting shape is illustrated in Fig.~\ref{eddy}.
\begin{figure}    \centering
    \centering
    \includegraphics[width=0.5\linewidth]{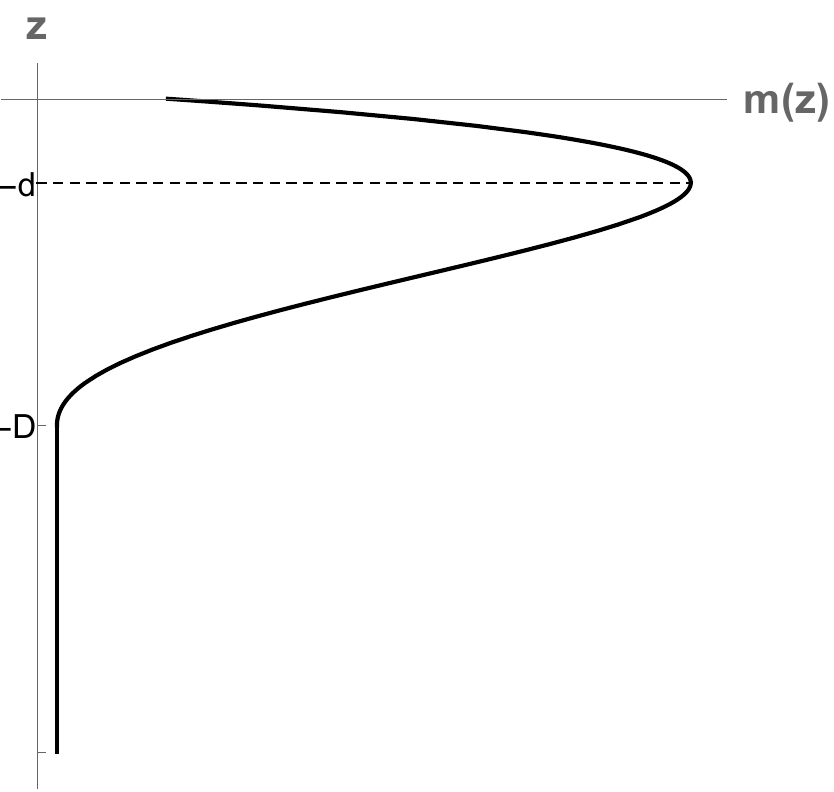}
    \caption{ \label{eddy}Depiction of a typical eddy viscosity profile according to the KPP parametrisation: it increases as depth increases, up to reaching a maximum at depth $-d$, then it decreases to a very small value (that can be thought of as the molecular viscosity) at a certain depth $-D$, below which it attains an almost constant value. As the eddy viscosity is generated by turbulent motion, which in this problem is due by the wind, the depth $-D$ can be considered as the Ekman depth, below which wind effects are negligible. The image is not to scale.}   
\end{figure}
On the other hand, since the Ekman transport is 
\begin{equation}
\begin{aligned}    
    \mathrm{M_E}=\dfrac{1}{f}\left\{\bigl(\tau_y,\, -\tau_x\bigr)- m(-H)\bigl(v'(-H),\,-u'(-H)\bigr)\right\},
    \end{aligned}
\end{equation}
one could argue that assuming a no-slip condition at the ocean's bottom, i.e. $u(-H)=v(-H)=0$, does not guarantee that $\mathrm{M_E}\perp\boldsymbol{\tau}$. In this last part, we show that in the regime $H \gg D$, the derivative $U'(-H)$ is very small, so that the orthogonality condition for the Ekman transport holds up to a small error. The main difficulty stems from the depth-dependence of $\rho$, which does not allow us to find an explicit solution of the governing ODE; however, since density variations are small at great depths (see \cite{talley}), we may use the following WKB ansatz to obtain an estimate on $U'(-H)$.

Let $\mathfrak{m}$ denote the (constant) value of the eddy viscosity below $-D$, and let $U_D := u(-D) + \mathrm{i}\,v(-D)$. The governing ODE \eqref{Ekman 1} then reads
\begin{equation}\label{ekman fondo}
    \mathfrak{m}\,U''(z) = \mathrm{i} f \rho(z)\, U(z), \qquad z \in (-H,-D),
\end{equation}
coupled with the boundary conditions
\begin{equation}
    U(-D) = U_D \quad (\text{given}), \qquad U(-H) = 0.
\end{equation}
 Note that, due to Theorem 1, we have that $|U_D| < |U(0)|$. 
Introducing the local wave-number
\begin{equation}
\kappa(z) = \sqrt{\frac{\mathrm{i} f \rho(z)}{\mathfrak{m}}} = (1+\mathrm{i})\,\mu(z),
\qquad
\mu(z) = \sqrt{\frac{f\rho(z)}{2\mathfrak{m}}} > 0,
\end{equation}
we look for solutions of \eqref{ekman fondo} of WKB form
\begin{equation}
U(z) \approx \kappa(z)^{-\frac{1}{2}} \exp\!\left(\pm \int_{-H}^z \kappa(s)\,\mathrm{d}s\right).
\end{equation}
Substituting this ansatz into \eqref{ekman fondo}, one finds that the terms neglected in the approximation are $O(\kappa'/\kappa^2)$; the WKB approximation is therefore justified provided
\begin{equation}
\left|\frac{\kappa'(z)}{\kappa^2(z)}\right| \ll 1
\qquad \Longleftrightarrow \qquad
\left|\frac{\rho'(z)}{\rho(z)}\right| \ll |\kappa(z)|,
\end{equation}
i.e.\ provided that $\rho$ varies slowly compared to the local wavelength of oscillation---a condition that, as noted above, is satisfied at great depths. With this in hand, we set
\begin{equation}
I(z) = \int_{-H}^{z} \kappa(s)\ds\qquad\text{and}\qquad J = \int_{-H}^{-D} \kappa(s)\ds=I(-D),
\end{equation}
and define the WKB solutions $U_\pm(z) = \kappa(z)^{-\frac{1}{2}}\e^{\,\pm I(z)}$, so that the general solution is $U(z) = a\,U_+(z) + b\,U_-(z)$. Imposing the boundary conditions $U(-H)=0$ (noting that $I(-H)=0$) and $U(-D)=U_D$, we get the following expression for the WKB solution:
\begin{equation}
\begin{aligned}
    U(z) &= U_D \left(\frac{\kappa(-D)}{\kappa(z)}\right)^{\frac{1}{2}}\frac{\sinh\big(I(z)\big)}{\sinh(J)} = U_D \sqrt[4]{\frac{\rho(-D)}{\rho(z)}}\frac{\sinh\big(I(z)\big)}{\sinh(J)},
\end{aligned}
\end{equation}
leading to
\begin{equation}\label{derivative}
\begin{aligned}
    U'(z) &= \frac{U_D\,\kappa(-D)^{\frac{1}{2}}}{\sinh(J)}\bigl(\kappa(z)^{\frac{1}{2}}\cosh\big(I(z)\big) -\tfrac12\,\kappa(z)^{-3/2}\kappa'(z)\,\sinh\big(I(z)\big)\bigr).
\end{aligned}
\end{equation}
Note that the second term, which comes from differentiating the slowly varying WKB prefactor $\kappa(z)^{-\frac{1}{2}}$, has size relative to the other (leading order) term controlled by
\begin{equation}
\begin{aligned}
    \left|\frac{\tfrac12\kappa(z)^{-3/2}\kappa'(z)\sinh(I(z))}{\kappa(z)^{\frac{1}{2}}\cosh(I(z))}\right| &= \left| \frac12\,\frac{\kappa'(z)}{\kappa(z)^{2}}\,\tanh\big(I(z)\big)\right|\leq \left|\frac{\kappa'(z)}{2\,\kappa(z)^{2}}\right|,
\end{aligned}
\end{equation}
which is consistent with the WKB ansatz ($|\rho'(z)/\rho(z)|\ll|\kappa(z)|$). Therefore, dropping this term is consistent with, and not worse than, the WKB approximation already made when constructing $U(z)$ itself. Moreover, note that the second term is exactly zero at the evaluation point $-H$, independently of the size of $\kappa'/\kappa^2$. Indeed, we have that
\(
\tfrac12\,\kappa(-H)^{-3/2}\kappa'(-H)\,\sinh\big(I(-H)\big)=0,
\)
so the formula for $U'(-H)$ below is not merely leading-order WKB, but it is exact at this particular point, given the WKB form of $U(z)$; the WKB error only enters indirectly, through how well $U(z)$ itself approximates the true solution away from $z=-H$. Consequently, for $z=-H$ we have
\begin{equation} \label{bound derivative}
\begin{aligned}
    U'(-H) &\approx\frac{U_D\,\bigl[\kappa(-H)\,\kappa(-D)\bigr]^{\frac{1}{2}}}{\sinh(J)} =U_D\sqrt{\frac{\I f}{\mathfrak{m}}}\dfrac{\sqrt[4]{\rho(-H)\rho(-D)}}{\sinh(J)}.
\end{aligned}
\end{equation}
Note that for constant $\rho$ we have $J= \dfrac{(1+\I) (H-D)}{\delta}$, where $\delta=\sqrt{\dfrac{2\mathfrak{m}}{f\rho}}$, and \eqref{bound derivative} reduces to
\begin{equation}
|U'(-H)| \approx 2\sqrt{2}\,\frac{|U_D|}{\delta}\, e^{\frac{D-H}{\delta}}.\end{equation}
To provide an estimate of $U'(-H)$, we consider the following example for a linearly increasing density: set the bottom of the ocean to be \(
H=1000\,\text{m}\) and the depth below which the wind effects are negligible to be \(D=100\,\text{m}\), and write  \(L\coloneqq H-D=900\,\text{m}\). Moreover, the values for the molecular dynamic viscosity of the water and the mid-latitude Coriolis parameter are, respectively, \(
\mathfrak{m} \approx  10^{-3}\ \mathrm{kg\,m^{-1}\,s^{-1}}\) and \(
f \approx  10^{-4}\ \text{s}^{-1} \). Finally, we assume that the density varies linearly from $\rho(-D)=1020\,\mathrm{kg\,m}^{-3}$ to $\rho(-H)=1050\,\mathrm{kg\,m}^{-3} $. Note that, in general, density variations are much smaller than assumed in this example (see, for example, \cite{talley}). With these data, we have
\[
\mu(z)=\sqrt{\frac{f\,\rho(z)}{2\mathfrak{m}}} \approx 0.23 \sqrt{\rho(z)}.
\]
Since $\rho(z)$ is linear in $z$, we can perform the change of variables $\xi=\rho(z)\in[1020,1050]$, with $\d z = 30\,\d\xi$, so that
\begin{equation}
    J =(1+\I) \int_{-H}^{-D}\mu(z)\dz\approx6.9\,(1+\I)\int_{1020}^{1050}\sqrt{\xi}\,\d \xi \approx 6700\,(1+\I),
\end{equation}
which gives the bound
\[
|U'(-H)| \lesssim 35\;|U_D|\;10^{-2910}\approx 0,\]
ensuring that, if the ocean is sufficiently deep and the Ekman depth is sufficiently smaller than the total depth of the ocean, imposing $U(-H) = 0$ also gives $U'(-H)\approx 0$, and the orthogonality condition between the wind and the Ekman transport is ensured. However, this estimate breaks down in regions where density variations play a more significant role, such as coastal areas, where the water column is also shallower. Indeed, in such regions the angle between the wind and the Ekman transport can deviate from $90^\circ$ (see the discussion in \cite{Lentz2012}), as the wind stress is balanced primarily by bottom stress and along-shelf pressure gradients.

\bigskip

\noindent
\textbf{Data availability statement.} No data were created or analysed in this study.\\
\textbf{Conflicts of Interest/Competing interests.} Both authors declare no conflict of interest.

\bibliographystyle{acm}

\bibliography{apssamp}

\end{document}